\documentclass{article}
\usepackage[english]{babel}

\usepackage[a4paper,top=2cm,bottom=2cm,left=3cm,right=3cm,marginparwidth=1.75cm]{geometry}

\usepackage{mathtools} 
\usepackage{amsxtra,amssymb,amsthm,amstext} 

\usepackage[colorlinks=true, allcolors=blue]{hyperref}
\usepackage[
    style=authoryear, 
    url=false,
    giveninits=true, 
    uniquename=false, 
    uniquelist=false, 
    maxcitenames=2,   
    maxbibnames=99,    
    dashed=false  
]{biblatex}

\usepackage{authblk}

\renewcommand{\cite}[1]{\parencite{#1}}
\renewcommand{\d}{\!\mathrm{d}}
\newcommand{\e}{\mathrm{e}}
\newcommand{\mean}[1]{\mathbb{E}\left(#1\right)}

\newtheorem{theorem}{Theorem}[section]
\newtheorem{lemma}{Lemma}[section]
\newtheorem{definition}{Definition}[section]

\title{Information on hidden birth events restores identifiability in phylodynamic inference}
\author[1,2]{Tobias Dieselhorst\thanks{tobias.dieselhorst@bsse.ethz.ch}}
\author[1,2]{Tanja Stadler}
\affil[1]{Department of Biosystems Science and Engineering, ETH Zürich, 4056 Basel, Switzerland}
\affil[2]{Swiss Institute of Bioinformatics, 1015 Lausanne, Switzerland}
\date{}

\begin{document}
\maketitle

\begin{abstract}
The parameters of many classes of birth-death processes cannot be inferred uniquely from phylogenetic trees: infinitely many parameter combinations yield the same distribution of phylogenetic trees. Here, we show that parameter identifiability can be recovered even for the most general cases of time-dependent rates when additional information on hidden birth events along branches of the reconstructed tree is available. This holds both for models in which individuals are sampled at a single point in time or through time at a time-dependent rate. Moreover, we prove that when mutations occur at birth -- assuming two different models for the accumulation of mutations at a birth event -- then information about hidden birth events is available in the sequences and thus all parameters of time-dependent birth-death models become identifiable. Thus, phylodynamic inference is identifiable whenever evolutionary models with mutation accumulation at birth (such as at speciation, transmission, or cell division) are plausible.
\end{abstract}

\section{Introduction}
Phylodynamic inference, the study of the population dynamics behind phylogenetic trees, has found wide applications in the analysis of the evolution of species, epidemics, and populations of cells \cite{stadler_phylodynamics_2021,macpherson_unifying_2022}. While phylogenetic tree reconstruction and phylodynamic inference are often done in parallel in Bayesian statistical analysis software \cite{hohna_revbayes_2016,bouckaert_beast_2019,baele_beast_2025}, the evolutionary process of mutation (or substitution; note that in what follows we write mutations which may correspond also to a substitution pending on application context) accumulation in sequences and the tree generating process are typically modeled independently.

Models of molecular evolution are the basis for the phylogenetic likelihood, the probability of a sequence alignment given a phylogenetic tree (with branch lengths corresponding to calendar time). They typically involve molecular clocks, where mutations in the genetic sequence of individuals occur independently at a rate over time and are therefore independent of the population dynamics. The mutations used for phylogenetic reconstruction are usually assumed to be neutral, and thus, the population dynamics remain unaffected.

Population dynamics are commonly modeled with birth-death processes. Here, individuals of a population replicate or die independently with birth rate $\lambda$ and death rate $\mu$. The case of constant rates was first introduced by Feller \cite{feller_grundlagen_1939}. The phylodynamic likelihood of such a model, i.e. the probability distribution over phylogenetic trees under a given model, also known as the tree prior, has been established for various sampling scenarios. The probability of a phylogenetic tree reconstructed from extant individuals which are sampled independently with probability $\rho$ at a single point in time has been established in \cite{stadler_incomplete_2009}. Birth-death-sampling models exist too, where individuals are sampled through time with a sampling rate $\psi$ \cite{stadler_sampling-through-time_2010}. Here, we only consider the case where sampled individuals are subsequently removed from the population.

The generalized birth-death model introduced by Kendall \cite{kendall_generalized_1948} allows the birth and death rates to vary through time. The functions of time for each rate are also called rate trajectories, and their likelihood under a given phylogenetic tree has been established for both previously described sampling scenarios \cite{morlon_reconciling_2011,macpherson_unifying_2022}. Detecting time-heterogeneity in the population dynamics is of particular interest in many applications. However, Louca, Pennell and collaborators have shown that the distribution of phylogenetic trees is not unique for each set of model parameters \cite{louca_extant_2020,louca_fundamental_2021}: Infinitely many models, which they term a congruence class, yield the same distribution of phylogenetic trees and, therefore, cannot be distinguished even in the case of infinitely large or many trees. This so-called non-identifiability can lead to misleading results, in particular when restricting the inference of rate trajectories to a particular class of functions.

Here, we show that these congruence classes resolve if we have information on hidden birth events: birth events are either visible in the phylogenetic tree, if both descendants leave sampled offspring, or lie hidden along branches, if one of the two descendants does not leave any sampled offspring. Genetic sequences do not contain information on hidden birth events if mutations occurred independent of the population dynamics.

Recently, however, models have been proposed that couple molecular evolution to birth events \cite{manceau_model-based_2020,douglas_evolution_2025,dieselhorst_phylodynamics_2026}. We will refer to these models as burst models. In single-cell phylodynamics, such models arise naturally when mutations occur at cell divisions. But punctuated evolution at branching events has also been proposed for species and language evolution \cite{eldredge_punctuated_1972,atkinson_languages_2008}. 
Hidden birth events contribute to the number of mutations along a branch. As the number of hidden birth events along branches depends on the birth-death process, burst models couple molecular evolution to the population dynamics. Based on these modeling assumptions, it seems very plausible that the resulting sequencing data contains information about the hidden birth events in a phylogeny and thus that identifiability is ensured. 

We will show that indeed this coupling allows us to recover identifiability of generalized birth-death process from the phylogenetic tree and the sequence alignment. We begin by showing that knowing the number of hidden birth events along branches of the phylogenetic tree makes these models fully identifiable. We prove this result for both sampling schemes and combinations of the two. For two established burst models, we then show that the sequencing data contains sufficient information on hidden birth events to identify both the birth-death process and the parameters of the burst model. The proofs of these results are simple and can easily be repeated for novel burst models in the future.

\section{Results}
We begin by defining identifiability.
\begin{definition}[Identifiability]
	We say the parameters $\theta$ of a model are identifiable (or simply the model is identifiable) from a likelihood function $P(D\mid \theta)$ if different parameters $\theta$ and $\theta'$ always yield different likelihoods for all possible data $D$, i.e.\ $P(D\mid \theta) = P(D\mid \theta')$ for all $D$ implies $\theta=\theta'$. This corresponds to the mapping from model parameters $\theta$ to the distribution of data $P(D\mid \theta)$ being injective.
\end{definition}

In the following, we will derive that all parameters of time-dependent birth-death models are identifiable from phylogenetic trees and the number of hidden birth events along branches. For simplicity, we will show that having both the same distributions of phylogenetic trees and of the number of hidden birth events along branches (rather then the same joint distribution) implies that two models are identical. The following lemma shows that this is sufficient to prove identifiability from the model likelihood, i.e.\ the joint probability of the tree $\mathcal{T}$ and the number of hidden birth events $\mathcal{B}=\left\{B_i\right\}_{i=1,...,n_{\mathcal{T}}}$ along each of the $n_{\mathcal{T}}$ branches of the tree
\begin{align}
    P(\mathcal{T},\mathcal{B}\mid \theta) &= P(\mathcal{T}\mid \theta)P(\mathcal{B}\mid \mathcal{T},\theta) \nonumber \\
    &= P(\mathcal{T}\mid \theta)\prod_{i=1}^{n_{\mathcal{T}}}P(B_i\mid \mathcal{T},\theta).
\end{align}
where the last equality follows from the independence of the number of hidden birth events along each branch.
\begin{lemma}
	We say a model with parameters $\theta$ is identifiable from the distributions $P(X\mid \theta)$ and $P(Y\mid X,\theta)$ of two random variables $X$ and $Y$ (e.g.\ trees $X$ and hidden birth events $Y$) if the mapping form $\theta$ to the tuple of distributions $(P(X\mid \theta), P(Y\mid X,\theta))$ is injective. In this case, the model is also identifiable from the joint distribution $P(X,Y\mid \theta)= P(Y\mid X,\theta)P(X\mid \theta)$.
    \label{lem:joint}
\end{lemma}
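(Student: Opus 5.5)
The claim is that the joint distribution determines the pair of distributions $(P(X\mid\theta),P(Y\mid X,\theta))$. Once that is shown, injectivity of $\theta\mapsto(P(X\mid\theta),P(Y\mid X,\theta))$ carries over to injectivity of $\theta\mapsto P(X,Y\mid\theta)$, because a composition of injective maps is injective. Concretely, I will assume $P(X,Y\mid\theta)=P(X,Y\mid\theta')$ for all $(X,Y)$ and deduce $P(X\mid\theta)=P(X\mid\theta')$ and $P(Y\mid X,\theta)=P(Y\mid X,\theta')$. The hypothesis then gives $\theta=\theta'$, which is exactly identifiability from the joint likelihood in the sense of the Definition.

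\textbf{Step 1 (marginal).} I recover the marginal by summing or integrating the joint over $Y$. Since $\sum_Y P(Y\mid X,\theta)=1$, we get $P(X\mid\theta)=\sum_Y P(X,Y\mid\theta)$. In our application $Y=\mathcal{B}$ is a vector of nonnegative integers, so this is a countable sum. The tree $X=\mathcal{T}$ has a discrete topology part and continuous branch lengths, so $P(X\mid\theta)$ should be read as a density with respect to a fixed, $\theta$-independent reference measure. Equal joints then give equal marginals: $P(X\mid\theta)=P(X\mid\theta')$ for all $X$ (almost everywhere with respect to the reference measure).

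\textbf{Step 2 (conditional).} On the set where $P(X\mid\theta)>0$, the conditional is the quotient $P(Y\mid X,\theta)=P(X,Y\mid\theta)/P(X\mid\theta)$. By Step 1 the numerator and denominator agree for $\theta$ and $\theta'$, so the conditionals agree there.

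\textbf{Main obstacle.} The only real subtlety is what happens on null sets of $X$, where the conditional is not determined by the joint. I plan to handle it in one of two ways. The first is to note that two conditional distributions are identified as equal when they coincide $P(X\mid\theta)$-almost everywhere; with this convention the argument is complete. The second is to observe that in the paper's setting the conditional $P(B_i\mid\mathcal{T},\theta)$ is given by an explicit formula that is continuous in the branch times. Agreement on the support then extends to agreement everywhere relevant, since the tree density is positive on all valid trees for any admissible parameters. I will state the almost-everywhere convention explicitly and note that it is all that the subsequent identifiability arguments use, because they only evaluate the conditionals on trees of positive density.
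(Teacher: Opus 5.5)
Your proposal is correct and follows the paper's proof. You marginalize the joint over $Y$ to get $P(X\mid\theta)=P(X\mid\theta')$, cancel this common marginal to get equal conditionals, and then invoke injectivity of $\theta\mapsto(P(X\mid\theta),P(Y\mid X,\theta))$.

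Your treatment of the set where $P(X\mid\theta)=0$ is extra care that the paper skips, since it divides by $P(X\mid\theta)$ without comment. Of your two remedies, the second (positivity of the tree density on all valid trees together with continuity in the branch times) is the one that closes this point. The almost-everywhere convention instead quietly strengthens the injectivity hypothesis.
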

\begin{proof}
	We assume two sets of parameters $\theta$ and $\theta'$ that yield the same joint distribution of $X$ and $Y$, i.e.
	\begin{align*}
		P(Y\mid X,\theta)P(X\mid \theta) = P(Y\mid X,\theta')P(X\mid \theta')
	\end{align*}
	for all $X$ and $Y$. Integrating both sides over $Y$, we get that \begin{align}
	    P(X\mid \theta) = P(X\mid \theta') \label{eq:ident_X}
	\end{align} for all $X$ and thus $P(Y\mid X,\theta)P(X\mid \theta) = P(Y\mid X,\theta')P(X\mid \theta)$.
	From this, we deduce that 
    \begin{align}
        P(Y\mid X,\theta) = P(Y\mid X,\theta') \label{eq:ident_Y}
    \end{align} for all $X$ and $Y$.
	If the parameters are identifiable from $(P(X\mid \theta),P(Y\mid X,\theta))$, results \eqref{eq:ident_X}-\eqref{eq:ident_Y} imply that $\theta=\theta'$. Thus, the parameters are also identifiable from the joint distribution $P(X,Y\mid \theta)$.
\end{proof}

\subsection{Identifiability from trees with extant sampling and hidden birth events}
\label{sec:extant_sampling}
Time-dependent birth-death models with extant sampling are defined by their birth and death rate trajectories $\lambda(\tau)$ and $\mu(\tau)$, where $\tau\geq0$ is the time before the observation (and thus running backwards). At $\tau=0$, each extant individual is sampled independently with probability $\rho$, and the phylogenetic tree of these samples is built.
Louca and Pennell \cite{louca_extant_2020} have shown that various parameter combinations yield the same distribution of phylogenetic trees, i.e.\ are congruent. This is true even when the sampling probability $\rho$ is known.

By showing that different models never have the same distributions of both phylogenetic trees and the number of hidden birth events along branches, we prove that they are fully identifiable when the number of hidden birth events is known.
\begin{theorem}
	Let $\lambda(\tau),\mu(\tau),\lambda'(\tau),\mu'(\tau)$ be strictly positive and continuously differentiable functions of time $\tau\geq 0$ and $\rho,\rho'\in(0,1]$ be constants.
	If two time-dependent birth-death models with extant sampling $\theta_{BD}=(\lambda(\tau),\mu(\tau),\rho)$ and $\theta'_{BD}=(\lambda'(\tau),\mu'(\tau),\rho')$ yield the same distributions of phylogenetic trees and the same distribution for the number of hidden birth events along branches, then they are identical, i.e.\ $\lambda(\tau)=\lambda'(\tau)$, $\mu(\tau)=\mu'(\tau)$ for all $\tau \geq 0$ and $\rho=\rho'$.
	\label{th:time_dep}
\end{theorem}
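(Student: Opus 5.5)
The plan is to use the known characterization of congruence classes together with an explicit formula for the distribution of hidden birth events along a branch. By Louca and Pennell, equality of the tree distributions means both models share the same pulled diversification rate $r_p(\tau)=\lambda(\tau)-\mu(\tau)-\frac{1}{\lambda(\tau)}\frac{\d \lambda}{\d \tau}(\tau)$ and the same pulled speciation rate $\lambda_p(\tau)=\rho\lambda(\tau)\,\e^{-\int_0^\tau (\lambda-\mu)(s)\d s}$, up to the paper's conventions. Equivalently, $\lambda_p(\tau)=\lambda(\tau)\,(1-E(\tau))$, where $E(\tau)$ solves $\frac{\d E}{\d \tau}=\mu-(\lambda+\mu)E+\lambda E^2$ with $E(0)=1-\rho$. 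The remaining task is to show that the hidden-birth distribution pins down one more quantity that separates the models inside a congruence class.

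For the hidden events, consider a lineage in the reconstructed tree at time $\tau$ (backwards). A hidden birth occurs when the lineage gives birth and one of the two descendants leaves no sampled offspring. Backwards along a branch from $\tau_1$ to $\tau_2$, the number $B_i$ of hidden births is therefore Poisson with mean $\int_{\tau_1}^{\tau_2}2\lambda(s)E(s)\d s$. The factor $2$ counts which daughter survives. I will check it against whether the paper's convention treats the two children as labelled; a constant factor does not affect the argument anyway. The reason this holds is that, conditioned on the lineage surviving to sampling, births whose other daughter goes extinct form an inhomogeneous Poisson process with intensity $2\lambda E$. This is the step I would verify with a short generating-function/Kolmogorov argument, since it is the key modelling input.

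The branch endpoints $\tau_1<\tau_2$ range over a set of positive measure under the tree distribution, with branching times supported on all of $(0,\infty)$ because the rates are positive. Equality of the distributions of $B_i$ given $\mathcal{T}$ therefore gives equality of the intensities: $\lambda(\tau)E(\tau)=\lambda'(\tau)E'(\tau)$ for all $\tau$, by continuity and differentiation in $\tau_2$. Combined with $\lambda(1-E)=\lambda'(1-E')$ and adding the two, we get $\lambda=\lambda'$, and hence $E=E'$. Evaluating at $\tau=0$ gives $\rho=1-E(0)=\rho'$. Finally, $\mu$ is recovered from the ODE. Either solve $\frac{\d E}{\d\tau}=\mu(1-E)-\lambda E(1-E)$ for $\mu=\frac{\d E/\d\tau}{1-E}+\lambda E$, which needs $E<1$ (this holds because $\rho>0$ and $\lambda>0$), or use $r_p=r_p'$ together with $\lambda=\lambda'$. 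Either way $\mu=\mu'$.

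The main obstacle is the rigorous justification that the hidden-birth count along a reconstructed branch is Poisson with intensity $2\lambda E$, independent of the tree beyond the branch endpoints. I would derive it by augmenting the survival ODE for a lineage's probability generating function, $\frac{\d}{\d\tau}\,G(z,\tau)$, and dividing by the survival probability $1-E$. A lesser technicality is passing from equality of distributions on branches to pointwise equality of the intensities, which uses continuity of the rates.
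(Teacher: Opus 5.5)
Your proposal is correct and is essentially the paper's proof. Hidden births along a branch are Poisson with intensity $2\lambda p_0$. Adding $\lambda p_0=\lambda' p_0'$ to the congruence invariant $\lambda(1-p_0)=\lambda'(1-p_0')$ gives $\lambda=\lambda'$, after which $\rho$ and $\mu$ follow; the paper derives that invariant inside its proof from $\rho\lambda(0)=\rho'\lambda'(0)$ and equal pulled diversification rates.

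The slips are cosmetic. With $\tau$ running backwards, the pulled diversification rate carries $+\frac{1}{\lambda}\frac{\d\lambda}{\d\tau}$, not a minus sign. Your first closed form $\rho\lambda(\tau)\e^{-\int_0^\tau(\lambda-\mu)}$ for the pulled speciation rate is not equivalent to $\lambda(1-E)$. Neither affects the argument, since you only use the form $\lambda(1-E)$ and, once $\lambda=\lambda'$, only the equality of pulled diversification rates.
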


\begin{proof}
The number of hidden birth events $B^{(\tau_s,\tau_e)}$ along a branch from time $\tau_s$ to $\tau_e<\tau_s$ before the present is Poisson distributed with mean 
\begin{align}
	\mean{B^{(\tau_s,\tau_e)}\mid\theta_{BD}} = \int_{\tau_e}^{\tau_s}\d\tau 2\lambda(\tau)p_0(\tau)
	\label{eq:mean_B_time_dep}
\end{align}
(see Theorem~\ref{th:poisson_B_extant_sampling} in the Appendix).
Here, $p_0(\tau)$ denotes the probability of an individual alive at time $\tau$ leaving no sampled descendants and is given by
\begin{align}
	p_0(\tau) = 1 - \frac{\e^{\int_0^\tau\d u\left[ \lambda(u)-\mu(u) \right]}}{\frac{1}{\rho} + \int_0^\tau\d s\e^{\int_0^s\d u\left[ \lambda(u)-\mu(u) \right]}\lambda(s)}
	\label{eq:p0_extant}
\end{align}
\cite{morlon_reconciling_2011}.
According to \cite{louca_extant_2020}, an alternative model $\theta'_{BD}=(\lambda'(\tau),\mu'(\tau),\rho')$ is congruent if and only if
\begin{align}
	\rho\lambda(0) &= \rho'\lambda'(0) \label{eq:rho_lambda_0} \\
	\lambda(\tau)-\mu(\tau)+\frac{1}{\lambda(\tau)}\frac{\d\lambda(\tau)}{\d\tau} &= \lambda'(\tau)-\mu'(\tau)+\frac{1}{\lambda'(\tau)}\frac{\d\lambda'(\tau)}{\d\tau}.
	\label{eq:pulled_div_rate}
\end{align}
The latter quantity is also called the pulled diversification rate \cite{louca_bacterial_2018,louca_extant_2020}. The exponentials in equation~\eqref{eq:p0_extant} of two congruent models are thus related according to
\begin{align*}
	\int_0^\tau\d u\left[ \lambda'(u)-\mu'(u) \right]
	&= \int_0^\tau\d u\left[ \lambda(u)-\mu(u)+\frac{1}{\lambda(u)}\frac{\d\lambda(u)}{\d u}-\frac{1}{\lambda'(u)}\frac{\d\lambda'(u)}{\d u} \right] \\
	&= \int_0^\tau\d u\left[ \lambda(u)-\mu(u) \right] + \log{\frac{\lambda(\tau)}{\lambda(0)}\frac{\lambda'(0)}{\lambda'(\tau)}} \\
	&= \int_0^\tau\d u\left[ \lambda(u)-\mu(u) \right] + \log{\frac{\rho\lambda(\tau)}{\rho'\lambda'(\tau)}},
\end{align*}
where we have used equality~\eqref{eq:rho_lambda_0} in the last step.
By inserting this result into the expression of the probability of extinction~\eqref{eq:p0_extant}, we find that for two congruent models, the latter are related by
\begin{align*}
	p'_0(\tau) &= 1 - \frac{\frac{\rho\lambda(\tau)}{\rho'\lambda'(\tau)}\e^{\int_0^\tau\d u\left[ \lambda(u)-\mu(u) \right]}}{\frac{1}{\rho'} + \frac{\rho}{\rho'}\int_0^\tau\d s \e^{\int_0^s\d u\left[ \lambda(u)-\mu(u) \right]}\lambda(s)} \\
	&= 1 - \frac{\frac{\lambda(\tau)}{\lambda'(\tau)}\e^{\int_0^\tau\d u\left[ \lambda(u)-\mu(u) \right]}}{\frac{1}{\rho} + \int_0^\tau\d s \e^{\int_0^s\d u\left[ \lambda(u)-\mu(u) \right]}\lambda(s)} \\
	&= \frac{\lambda(\tau)}{\lambda'(\tau)}\left(p_0(\tau)-1\right) + 1.
\end{align*}
Here, $p'_0(\tau)$ denotes the probability of extinction for the alternative model $\theta'_{BD}$.
Plugging this into equation~\eqref{eq:mean_B_time_dep}, we find
\begin{align*}
	\mean{B^{(\tau_s,\tau_e)}\mid\theta'_{BD}} &= \int_{\tau_e}^{\tau_s}\d\tau 2\lambda'(\tau) p'_0(\tau) \\
	&= \int_{\tau_e}^{\tau_s}\d\tau2\lambda(\tau)p_0(\tau) + \int_{\tau_e}^{\tau_s}\d\tau2\left[\lambda'(\tau) - \lambda(\tau)\right] \\ 
	&= \mean{B^{(\tau_s,\tau_e)}\mid\theta_{BD}} + 2\int_{\tau_e}^{\tau_s}\d\tau\left[\lambda'(\tau) - \lambda(\tau)\right] .
\end{align*}
Thus, the mean number of hidden birth events \eqref{eq:mean_B_time_dep} is equal for both processes if and only if \linebreak $\int_{\tau_e}^{\tau_s}\d\tau\left[\lambda'(\tau)-\lambda(\tau)\right]=0$. In order for two models to yield the same expected number of hidden birth events along any branch of any tree, this integral must vanish for all $\tau_s$ and $\tau_e$. This is only possible if $\lambda'(\tau)=\lambda(\tau)$ for all $\tau\geq 0$. Then, from equation~\eqref{eq:rho_lambda_0}, we deduce that the sampling probabilities $\rho$ and $\rho'$ must be the same as well. Combining these results with the equality of the pulled diversification rates~\eqref{eq:pulled_div_rate}, we find that the death rates must be equal everywhere as well, i.e.\ $\mu(\tau)=\mu'(\tau)$. As Poisson distributions are uniquely defined by their means, this concludes the proof.
\end{proof}

\subsubsection*{Implications for constant-rate models}
Even in the case of constant birth and death rates, it is known that only two of the three parameters $(\lambda,\mu,\rho)$ can be inferred from a phylogenetic tree alone \cite{stadler_incomplete_2009,stadler_how_2013}. Theorem~\ref{th:time_dep} implies that when the number of hidden birth events along branches is known, all three parameters become identifiable. In Appendix~\ref{sec:const_rates_derivation}, we provide an alternative direct proof of identifiability for this particular case.

\subsection{Identifiability from trees with sampling through time and hidden birth events}
\label{sec:sampling_through_time}
Congruence classes also exist for time-dependent birth-death processes with samples taken at rate $\psi(\tau)$ through time \cite{louca_fundamental_2021}, if samples are removed from the population \cite{truman_fossilized_2025}. As above, we show that congruent models with the same statistics of hidden birth events are identical.

\begin{theorem}
	Let $\lambda(\tau),\mu(\tau),\psi(\tau),\lambda'(\tau),\mu'(\tau),\psi'(\tau)$ be strictly positive continuously differentiable functions of time $\tau\geq 0$.
	If two time-dependent birth-death-sampling models $\theta_{BDS}=(\lambda(\tau),\mu(\tau),\psi(\tau))$ and $\theta'_{BDS}=(\lambda'(\tau),\mu'(\tau),\psi'(\tau))$ yield the same distribution of phylogenetic trees and the same distribution for the number of hidden birth events along branches, then they are identical, i.e.\ $\lambda(\tau)=\lambda'(\tau)$, $\mu(\tau)=\mu'(\tau)$ and $\psi(\tau)=\psi'(\tau)$ for all $\tau \geq 0$.
	\label{th:sampling_through_time}
\end{theorem}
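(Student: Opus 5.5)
The plan mirrors the proof of Theorem~\ref{th:time_dep}. The number of hidden birth events along a branch from $\tau_s$ to $\tau_e$ is again Poisson distributed (the sampling-through-time analogue of Theorem~\ref{th:poisson_B_extant_sampling}), with mean
\begin{align*}
	\mean{B^{(\tau_s,\tau_e)}\mid\theta_{BDS}} = \int_{\tau_e}^{\tau_s}\d\tau\, 2\lambda(\tau)p_0(\tau).
\end{align*}
Here $p_0$ solves the backward Kolmogorov equation
\begin{align*}
	\frac{\d p_0}{\d\tau} = \mu(\tau) - \left[\lambda(\tau)+\mu(\tau)+\psi(\tau)\right]p_0(\tau) + \lambda(\tau)p_0(\tau)^2,
\end{align*}
with $p_0(0)=1$, since there is no sampling at the present.

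Equal distributions of the number of hidden births along every branch, for all $\tau_s>\tau_e\geq0$, give $\lambda(\tau)p_0(\tau)=\lambda'(\tau)p_0'(\tau)$ for all $\tau$. This follows by differentiating the mean in $\tau_s$ and using continuity; the mean determines a Poisson law.

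Next I use the tree distribution. Rather than recall the full congruence characterisation of \cite{louca_fundamental_2021}, I take from it the invariants that congruent birth-death-sampling models share. These are the pulled speciation rate $\lambda_p(\tau)=\lambda(\tau)(1-p_0(\tau))$ and the sampling-rate invariant $\lambda(0)\psi(0)$ (or the pulled sampling rate $\psi(\tau)/(1-p_0(\tau))$ in the removal setting of \cite{truman_fossilized_2025}). Adding $\lambda_p=\lambda_p'$ to $\lambda p_0=\lambda'p_0'$ gives $\lambda(\tau)=\lambda'(\tau)$ for all $\tau$, exactly as the additive term $2\int(\lambda'-\lambda)$ appeared in Theorem~\ref{th:time_dep}. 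Because $\lambda>0$, it follows that $p_0=p_0'$.

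It remains to recover $\mu$ and $\psi$. The invariant involving $\psi$ together with $p_0=p_0'$ gives $\psi=\psi'$. For example, the probability density of a sampled lineage at time $\tau$ that is observed as a tip is $\psi(\tau)/(1-p_0(\tau))$ per lineage, a tree-level quantity. Substituting $\lambda=\lambda'$, $\psi=\psi'$ and $p_0=p_0'$ into the Kolmogorov equation then gives $\mu(\tau)(1-p_0(\tau))=\mu'(\tau)(1-p_0(\tau))$. Since $1-p_0(\tau)>0$ for $\psi>0$, this yields $\mu=\mu'$.

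The main obstacle is pinning down exactly which quantities the tree distribution determines in the sampling-through-time case. If the pulled sampling rate is not directly available from \cite{louca_fundamental_2021}, the fallback is as follows. The lineage-through-time density and the branching-time structure determine $\lambda_p$. The density of sampled tips at time $\tau$ relative to the number of lineages determines $\psi(\tau)/(1-p_0(\tau))$. This is argued from the likelihood, in which a tip at time $\tau$ contributes the factor $\psi(\tau)$ and each edge contributes $g$-functions, whose ratio structure gives the factor $1/(1-p_0)$. Once these two tree invariants are established, the rest is algebra.
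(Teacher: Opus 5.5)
Your proof is correct, but it separates $\mu$ from $\psi$ with a different tree invariant than the paper does.

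The first half matches the paper in substance. The paper inserts $(1-p_0)\lambda=(1-p_0')\lambda'$ into $\int 2\lambda'p_0'$ to produce the extra term $2\int(\lambda'-\lambda)$. You instead differentiate in $\tau_s$ to get $\lambda p_0=\lambda'p_0'$ pointwise, then add the pulled birth rates.

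The second half differs. The paper uses the pulled diversification rate $r_p$ to get $\mu+\psi=\mu'+\psi'$, then reads off $\mu=\mu'$ from the constant term of the Riccati equation for $p_0$. You use the pulled sampling rate $\psi_p=\psi/(1-p_0)$ to get $\psi=\psi'$ directly, then use the Riccati equation for $\mu$.

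Your invariant is legitimate. Conditioning each lineage on leaving a sampled descendant turns the reconstructed tree into a pure birth--sampling process with per-lineage rates $\lambda_p=\lambda(1-p_0)$ and $\psi_p$. From the equation for $p_0$ one checks $\psi_p=\frac{\mathrm{d}}{\mathrm{d}\tau}\log\lambda_p+\lambda_p-r_p$. So for fixed $\lambda_p$, your characterization is equivalent to the one the paper cites. The advantage of your route is that it can be justified from first principles, without quoting the exact form of $r_p$.

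Two small repairs are needed:
\begin{itemize}
\item The alternative invariant $\lambda(0)\psi(0)$ you mention would not suffice, because it fixes $\psi$ at a single time only. You genuinely need the whole function $\psi_p$.
\item Since $p_0(0)=1$, your divisions by $1-p_0$ give $\psi=\psi'$ and $\mu=\mu'$ only for $\tau>0$; the case $\tau=0$ follows by continuity. The paper's last step avoids this issue: once $\mu+\psi$ is matched, $\mu$ enters the Riccati equation with coefficient one.
\end{itemize}
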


\begin{proof}
Two birth-death-sampling models are congruent if and only if
\begin{align}
	(1-p_0(\tau))\lambda(\tau) &= (1-p'_0(\tau))\lambda'(\tau) \label{eq:pulled_birth_rate} \\
	\lambda(\tau)-\mu(\tau)-\psi(\tau) + \frac{1}{\lambda(\tau)}\frac{\d\lambda(\tau)}{\d\tau} &= \lambda'(\tau)-\mu'(\tau)-\psi'(\tau) + \frac{1}{\lambda'(\tau)}\frac{\d\lambda'(\tau)}{\d\tau}\label{eq:pulled_div_rate_sampling}
\end{align}
where the first quantity is called the pulled birth rate and the second one the pulled diversification rate with sampling \cite{louca_fundamental_2021}.
Again, $p_0(\tau)$ denotes the probability of a lineage not being in the phylogenetic tree, and is given as the solution of the differential equation
\begin{align}
	\frac{\d p_0(\tau)}{\d\tau} &= - \left(\lambda(\tau)+\mu(\tau)+\psi(\tau)\right)p_0(\tau) + \lambda(\tau) p_0(\tau)^2 + \mu(\tau) \label{eq:diff_eq_p0} \\
	p_0(0) &= 1 \label{eq:init_p0}
\end{align}
\cite{louca_fundamental_2021,macpherson_unifying_2022}.

As for the case of extant sampling, the number of hidden birth events $B^{(\tau_s,\tau_e)}$ along a branch from $\tau_s$ to $\tau_e$ is Poisson distributed with mean
\begin{align*}
	\mean{B^{(\tau_s,\tau_e)}\mid\theta_{BDS}} = \int_{\tau_e}^{\tau_s}\d\tau 2\lambda(\tau)p_0(\tau)
\end{align*}
(see Theorem~\ref{th:poisson_B_sampling_through_time} in the Appendix).

For two congruent models $\theta_{BDS}=(\lambda(\tau),\mu(\tau),\psi(\tau))$ and $\theta'_{BDS}=(\lambda'(\tau),\mu'(\tau),\psi'(\tau))$, we find
\begin{align*}
	\mean{B^{(\tau_s,\tau_e)}\mid\theta'_{BDS}} &= \int_{\tau_e}^{\tau_s}\d\tau 2\lambda'(\tau)p'_0(\tau) \\
	&= 2\int_{\tau_e}^{\tau_s}\d\tau \left[\lambda'(\tau) - \left(1-p_0(\tau)\right)\lambda(\tau)\right] \\
	&= 2\int_{\tau_e}^{\tau_s}\d\tau \lambda(\tau)p_0(\tau) + 2\int_{\tau_e}^{\tau_s}\d\tau \left[\lambda'(\tau) - \lambda(\tau)\right] \\
	&= \mean{B^{(\tau_s,\tau_e)}\mid\theta_{BDS}} + 2\int_{\tau_e}^{\tau_s}\d\tau \left[\lambda'(\tau) - \lambda(\tau)\right]
\end{align*}
where in the second line, we have used equality~\eqref{eq:pulled_birth_rate}. Applying the same arguments as in the proof of Theorem~\ref{th:time_dep}, the two models will only have the same expected number of hidden birth events $\mean{B^{(\tau_s,\tau_e)}\mid\theta_{BDS}}=\mean{B^{(\tau_s,\tau_e)}\mid\theta'_{BDS}}$ if $\lambda(\tau)=\lambda'(\tau)$ for all $\tau\geq 0$.
Plugging the equality of birth rates into the equality of pulled birth rates \eqref{eq:pulled_birth_rate}, we find that both models must also have the same extinction probabilities at all times $\tau$, i.e. $p_0(\tau) = p'_0(\tau)$.
Furthermore, inserting this result into the equality of the pulled diversification rates~\eqref{eq:pulled_div_rate_sampling} for congruent models, we find that the sum of death and sampling rates must be equal at all times too, i.e. $\mu(\tau) + \psi(\tau) = \mu'(\tau) + \psi'(\tau)$ for all $\tau\geq 0$.
From these equalities and the differential equation~\eqref{eq:diff_eq_p0} for $p_0(\tau)$, in which the last term only depends on $\mu(\tau)$, we deduce that $\mu(\tau)=\mu'(\tau)$ for all $\tau\geq 0$, and therefore also $\psi(\tau)=\psi'(\tau)$. This concludes the proof, as the Poisson distribution is uniquely defined by its mean, and we have shown that for congruent models $\theta_{BDS}$ and $\theta'_{BDS}$, $\mean{B^{(\tau_s,\tau_e)}\mid\theta_{BDS}}=\mean{B^{(\tau_s,\tau_e)}\mid\theta'_{BDS}}$ for all $\tau_s$ and $\tau_e$ implies that the models are identical, i.e.\ $\theta_{BDS}=\theta'_{BDS}$.
\end{proof}

\subsubsection*{Sampling through time and at the present}
It is straightforward to extend the results of Theorems \ref{th:time_dep} and \ref{th:sampling_through_time} to combined models $\theta_{BDC}=(\lambda(\tau),\mu(\tau),\psi(\tau),\rho)$ where individuals are sampled through time with rate $\psi(\tau)$ and at the present with probability $\rho$: A tree $\mathcal{T}_{h,BDC}$ arising from such a combined model (denoted by the $BDC$ subscript) with number of hidden birth events along each branch known (denoted by the $h$ subscript) can be decomposed into two trees - $\mathcal{T}_{h,BDS}$ containing only the samples through time and $\mathcal{T}_{h,BD}$ containing only the samples of extant individuals at the present (again, the number of hidden birth events along the branches of these trees are known, thus the subscript $h$). Even though this decomposition reduces the amount of information, we know that each of the original models is identifiable from its respective tree, i.e.\ the mappings from $\theta_{BD}=(\lambda(\tau),\mu(\tau),\rho)$ to $P(\mathcal{T}_{h,BD}\mid\theta_{BD})$ and from $\theta_{BDS}=(\lambda(\tau),\mu(\tau),\psi(\tau))$ to $P(\mathcal{T}_{h,BDS}\mid\theta_{BDS})$ are injective. Therefore, the same is true for the mapping from $(\theta_{BD},\theta_{BDS})$ to $(P(\mathcal{T}_{h,BD}\mid\theta_{BD}),P(\mathcal{T}_{h,BDS}\mid\theta_{BDS}))$. Due to the injectivity of the mapping from $\theta_{BDC}$ to $(\theta_{BD},\theta_{BDS})$ and the composition of injective functions being injective, $\theta_{BDC}$ is identifiable from $(P(\mathcal{T}_{h,BD}\mid\theta_{BD}),P(\mathcal{T}_{h,BDS}\mid\theta_{BDS}))$.

\subsection{Information in sequencing data on hidden birth events}
\label{sec:summing_over_B}
In practice, the exact number of hidden birth events along branches is often unknown. Instead, we may only have some information on the number of hidden birth events along a branch. This information can be expressed as the probability $P(D\mid B, M_\mathrm{burst})$ of observing data $D$ along a branch with $B$ hidden birth events and a burst model $M_\mathrm{burst}$ for the relationship between $B$ and $D$. In this context, data $D$ could be the number of mutations. \cite{dieselhorst_phylodynamics_2026} or evolutionary distance \cite{douglas_evolution_2025} along a branch, if these measures increase at birth events. We then assume $D$ to be the sum of independent and identically distributed contributions $\hat{D}_i$ from each of the $B$ hidden birth events and one additional birth event from which the branch originates, i.e. $D = \sum_{i=1}^{B+1} \hat{D}_i$.
To compute the probability density of $D$ along a branch under a given birth-death model $\theta$ (either of the ones discussed above) and a burst model $M_\mathrm{burst}$, we sum over all possible (Poisson distributed) numbers of hidden birth events
\begin{align}
	P(D=d\mid \theta,M_\mathrm{burst}) &= \sum_{b=0}^\infty P\left(D\mid B=b,M_\mathrm{burst}\right) \frac{\e^{-\beta}\beta^b}{b!} \nonumber \\ 
	&= \sum_{b=0}^\infty P\left(\sum_{i=1}^{b+1}\hat{D}_i=d\mid M_\mathrm{burst}\right) \frac{\e^{-\beta}\beta^b}{b!}.
	\label{eq:shifted_comp_poisson}
\end{align}
This is a compound Poisson distribution \cite{dieselhorst_phylodynamics_2026} with a shift to account for the observed birth event at the beginning of a branch (the plus one in the upper bound of the nested sum). $\beta\equiv\mean{B^{(\tau_s,\tau_e)}\mid \theta}$ is the expected number of hidden birth events along the branch considered.
If $\beta$ is identifiable from this distribution, we can use the theorems above to prove identifiability of the entire birth-death process $\theta$ from \eqref{eq:shifted_comp_poisson} and the distribution of trees.

The exact form of the distribution of $\hat{D}$ is usually unknown. Therefore, and because of the shift in \eqref{eq:shifted_comp_poisson}, standard results on the identifiability of mixture distributions \cite{teicher_identifiability_1961} cannot be applied. However, for many classes of parametrized distributions for $\hat{D}$, one can show that both, the parameters defining the distribution of $\hat{D}$ (and thus characterizing $M_\mathrm{burst}$) and $\beta$, are identifiable from \eqref{eq:shifted_comp_poisson}.

It turns out that identifiability can be shown conveniently from the cumulants. The moment generating function (MGF) and cumulant generating function (CGF) of $D$ are given by
\begin{align}
	M(t) &= \mean{\e^{tD}} = M_{\hat{D}}(t)\e^{\beta\left(M_{\hat{D}}(t)-1\right)} \label{eq:mom_gen_D} \\
    K(t) &= \log{M(t)} = K_{\hat{D}}(t) + \beta\left(M_{\hat{D}}(t)-1\right), \label{eq:cum_gen_D}
\end{align}
with $M_{\hat{D}}(t)$ denoting the MGF and $K_{\hat{D}}(t)$ the CGF of $\hat{D}$.
The $n$-th cumulant of the distribution \eqref{eq:shifted_comp_poisson} of $D$ is given by the $n$-th derivative of the CGF \eqref{eq:cum_gen_D} evaluated at $t=0$ and therefore given by
\begin{align}
    \kappa_n &= \kappa_{\hat{D},n} + \beta m_{\hat{D},n}, \label{eq:cumulant_D}
\end{align}
where $\kappa_{\hat{D},n}$ is the $n$-th cumulant and $m_{\hat{D},n}$ is the $n$-th moment of $\hat{D}$.
Note that if a cumulant generating function exists (in a neighborhood around $t=0$), then they uniquely define the probability distribution. Thus, the strategy to show identifiability is to prove that different parameters yield different cumulants or cumulant generating functions and therefore different distributions of $D$.
As examples, we now show the identifiability of two previously established burst models.

We begin with the case considered in \cite{dieselhorst_phylodynamics_2026}, where $\hat{D}$ are Poisson distributed mutation counts. We note that, while the model was introduced with constant rates and extant sampling only, we prove identifiability for any of the birth-death models considered above.
\begin{theorem}
	Let the $\hat{D}$ be Poisson distributed with mean $\eta>0$, $B$ be Poisson distributed with mean $\beta>0$ and $D=\sum_{i=0}^{B+1}\hat{D}_i$. Then both, $\eta$ and $\beta$ are identifiable from the shifted compound Poisson distribution~\eqref{eq:shifted_comp_poisson} of $D$.
\end{theorem}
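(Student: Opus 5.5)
We use the cumulant route set up just above: show that the map $(\eta,\beta)\mapsto$ law of $D$ is injective by reading $\eta$ and $\beta$ off the first two cumulants. (We read the sum in the statement as $\sum_{i=1}^{B+1}\hat{D}_i$, consistent with \eqref{eq:shifted_comp_poisson}.) For $\hat{D}\sim\mathrm{Poisson}(\eta)$ the MGF is $M_{\hat{D}}(t)=\e^{\eta(\e^t-1)}$, which exists for all $t$. Hence the CGF \eqref{eq:cum_gen_D} of $D$ exists on all of $\mathbb{R}$:
\begin{align*}
K(t)=\eta(\e^t-1)+\beta\left(\e^{\eta(\e^t-1)}-1\right).
\end{align*}
In particular all cumulants of $D$ are finite and are functions of the law of $D$. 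If $(\eta,\beta)$ and $(\eta',\beta')$ give the same distribution of $D$, they give the same $\kappa_1$ and $\kappa_2$.

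Next we compute the first two cumulants from \eqref{eq:cumulant_D}. The needed quantities are $\kappa_{\hat{D},1}=\kappa_{\hat{D},2}=\eta$, $m_{\hat{D},1}=\eta$ and $m_{\hat{D},2}=\eta+\eta^2$. This gives
\begin{align*}
\kappa_1=\eta(1+\beta),\qquad \kappa_2=\eta+\beta\eta(1+\eta).
\end{align*}
Subtracting the first from the second gives $\kappa_2-\kappa_1=\beta\eta^2$. Dividing by $\kappa_1$ gives
\begin{align*}
\frac{\kappa_2-\kappa_1}{\kappa_1}=\frac{\beta\eta}{1+\beta}=\eta-\frac{\eta}{1+\beta}=\eta-\frac{\eta^2}{\kappa_1}.
\end{align*}

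The main obstacle is inverting these relations, since they are not linear in $(\eta,\beta)$. Writing $r=(\kappa_2-\kappa_1)/\kappa_1$, the identity above says $\eta$ solves $\eta^2-\kappa_1\eta+r\kappa_1=0$. This quadratic can have two positive roots, so the first two cumulants alone may not determine $\eta$.

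We resolve this with the third cumulant. Here $m_{\hat{D},3}=\eta+3\eta^2+\eta^3$, so
\begin{align*}
\kappa_3=\eta+\beta(\eta+3\eta^2+\eta^3),
\end{align*}
and therefore $\kappa_3-\kappa_1=\beta\eta^2(3+\eta)$. Dividing by $\kappa_2-\kappa_1=\beta\eta^2>0$ gives
\begin{align*}
\eta=\frac{\kappa_3-\kappa_1}{\kappa_2-\kappa_1}-3,
\end{align*}
which is uniquely determined by the law of $D$. Then $\beta=\kappa_1/\eta-1$ is uniquely determined as well.

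So equal laws of $D$ force $\eta=\eta'$ and $\beta=\beta'$. Combined with the identifiability of $\beta$ for every branch and Theorems~\ref{th:time_dep} and \ref{th:sampling_through_time} (via Lemma~\ref{lem:joint}), this yields identifiability of the full model.
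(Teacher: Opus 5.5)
Your proof is correct and follows the paper's route: the first two cumulants leave a quadratic in $\eta$ with possibly two admissible roots, and the third cumulant removes the ambiguity. The only difference is that you invert explicitly via $\kappa_3-\kappa_1=(\kappa_2-\kappa_1)(3+\eta)$ with $\kappa_2-\kappa_1=\beta\eta^2>0$, obtaining $\eta=(\kappa_3-\kappa_1)/(\kappa_2-\kappa_1)-3$ directly; this is tidier than the paper's comparison of $\kappa_{3,+}$ and $\kappa_{3,-}$ for the two roots $\eta_\pm$, and it makes explicit the nonvanishing of $\kappa_2-\kappa_1$ that the paper's contradiction step uses without stating it.
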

\begin{proof}
The first three moments of a Poisson distribution with mean $\eta$ are given by
\begin{align*}
    m_{\hat{D},1} &= \eta \\
    m_{\hat{D},2} &= \eta + \eta^2 \\
    m_{\hat{D},3} &= \eta + 3\eta^2 + \eta^3
\end{align*}
and all cumulants are equal to $\eta$. Using equation \eqref{eq:cumulant_D}, we find the first three cumulants of the distribution of $D$
\begin{align}
	\kappa_1 &= (\beta+1) \eta \label{eq:k1_poi} \\
	\kappa_2 &= \beta\eta^2 + (\beta + 1)\eta \label{eq:k2_poi} \\
    \kappa_3 &= \beta\eta^3 + 3\beta\eta^2 + \beta\eta + \eta.\label{eq:k3_poi}
\end{align}
Solving the equations for the first two cumulants \eqref{eq:k1_poi}-\eqref{eq:k2_poi} for $\eta$ and $\beta$ yields two possible solutions
\begin{align*}
	\eta_\pm &= \frac{1}{2} \left( \kappa_1 \pm \sqrt{\kappa_1^2 + 4\left(\kappa_1-\kappa_2\right)}  \right) \\
	\beta_\pm &= \frac{\eta_\mp}{\eta_\pm}.
\end{align*}
The two solutions are identical if the square root vanishes. For the general case where the two solutions are not the same, we use equation \eqref{eq:k3_poi} to compute the corresponding third cumulants
\begin{align*}
    \kappa_{3,\pm} &= \eta_\mp\eta_\pm^2 + 3\eta_\mp\eta_\pm + \eta_\mp + \eta_\pm \\
    &= - (\kappa_1-\kappa_2)\eta_\pm - (\kappa_1-\kappa_2) + \kappa_1.
\end{align*}
We now assume that both solutions yield the same third cumulant, i.e. $\kappa_{3,+}=\kappa_{3,-}$. However, from this and the equality of the first two cumulants would follow $\eta_+=\eta_-$ and therefore also $\beta_+=\beta_-$. Thus, by contradiction, we have proven that every set of parameters $(\eta,\beta)$ yields different cumulants $(\kappa_1,\kappa_2,\kappa_3)$ or, in other words, both parameters are uniquely identifiable from the first three cumulants and, therefore, also from the distribution of $D$.
\end{proof}

Next, we consider the gamma spike model \cite{douglas_evolution_2025}, where $D$ is a positive real-valued evolutionary distance defined by $D=s_\mu\times S_B$ where $s_\mu$ is a positive parameter and $S_B$ is a random variable following a gamma distribution $\Gamma(\alpha=s_\alpha(B+1),\theta=1/s_\alpha)$ ($\alpha$ us the shape and $\theta$ the scale parameter). The sum of $n$ independent random variables with distribution $\Gamma(\alpha,\theta)$ is itself gamma distributed with $\Gamma(n\alpha,\theta)$. Therefore, the contribution of each birth event is given by $\hat{D}=s_\mu\times S_0$ and fully parametrized by $s_\mu$ and $s_\alpha$. In the following, we will use the more common parametrization in terms of the shape parameter $\alpha=s_\alpha$ and the scale parameter $\theta=s_\mu/s_\alpha$, where we have used that the factor of $s_\mu$ can be applied to the scale parameter.
Again, we show identifiability of time-dependent birth-death processes, not only for the constant-rate model considered in \cite{douglas_evolution_2025}.
\begin{theorem}
	Let $\hat{D}$ be gamma distributed with shape parameter $\alpha>0$ and scale parameter $\theta>0$. Furthermore, let $B$ be Poisson distributed with mean $\beta>0$ and $D=\sum_{i=0}^{B+1}\hat{D}_i$. Then all three parameters $\alpha$, $\theta$ and $\beta$ are identifiable from the shifted compound Poisson distribution~\eqref{eq:shifted_comp_poisson} of $D$.
\end{theorem}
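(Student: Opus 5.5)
We work with the cumulant generating function \eqref{eq:cum_gen_D}, as in the Poisson case. For $\hat D\sim\Gamma(\alpha,\theta)$ we have $M_{\hat D}(t)=(1-\theta t)^{-\alpha}$ for $t<1/\theta$, so
\begin{align*}
K(t) = -\alpha\log(1-\theta t) + \beta\left[(1-\theta t)^{-\alpha}-1\right].
\end{align*}
The moments are $m_{\hat D,n}=\theta^n\alpha(\alpha+1)\cdots(\alpha+n-1)$ and the cumulants are $\kappa_{\hat D,n}=(n-1)!\,\alpha\theta^n$. By \eqref{eq:cumulant_D},
\begin{align*}
\kappa_n = \theta^n\left[(n-1)!\,\alpha + \beta\,\alpha(\alpha+1)\cdots(\alpha+n-1)\right].
\end{align*}
Since the CGF exists near $t=0$, equal distributions of $D$ give equal cumulant sequences $(\kappa_n)_{n\ge1}$. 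It therefore suffices to show that $(\alpha,\theta,\beta)$ is determined by the cumulants.

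Rather than solving a polynomial system from the first few cumulants, I would use the asymptotic growth of the whole sequence. Write $(\alpha)_n=\alpha(\alpha+1)\cdots(\alpha+n-1)=\Gamma(\alpha+n)/\Gamma(\alpha)$. Because $\beta>0$, the term $\beta(\alpha)_n$ dominates $(n-1)!\,\alpha$ as $n\to\infty$: their ratio is about $n^{\alpha}\beta/(\alpha\Gamma(\alpha))\to\infty$. The steps are then as follows.
\begin{itemize}
\item[(i)] $\kappa_{n+1}/\kappa_n = \theta(n+\alpha)(1+o(1))$, so $\kappa_{n+1}/(n\kappa_n)\to\theta$, and $\theta$ is identified.
\item[(ii)] Set $c_n=\kappa_n/(\theta^n (n-1)!)=\alpha+\beta\,\Gamma(\alpha+n)/(\Gamma(\alpha)\Gamma(n))$. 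Since $\Gamma(\alpha+n)/\Gamma(n)\sim n^{\alpha}$, we get $\log c_n/\log n\to\alpha$, so $\alpha$ is identified.
\item[(iii)] Then $\beta=(c_1-\alpha)/\alpha$, using $c_1=\alpha+\beta\alpha$, which recovers $\beta$.
\end{itemize}

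An alternative that avoids asymptotics is to compare $K$ and $K'$ as analytic functions. Analytic continuation of the equality $K=K'$ near $0$ forces the singular points $1/\theta$ and $1/\theta'$ to coincide. Near $1/\theta$ the dominant singular part is $\beta(1-\theta t)^{-\alpha}$, which fixes $\alpha$ and $\beta$, and the remaining logarithmic term is then consistent. I would present the cumulant-ratio argument as the main proof and mention the analytic one only as a check.

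The main obstacle is making step (i) rigorous, because the lower-order term $(n-1)!\,\alpha\theta^n$ has the same factorial growth and differs from the dominant term only by a factor of $n^{\alpha}$. One must check that the correction to the ratio is $o(n)$. This holds because $\kappa_n=\theta^n\beta(\alpha)_n\,(1+O(n^{-\alpha}))$, so $\kappa_{n+1}/\kappa_n=\theta(n+\alpha)(1+O(n^{-\alpha}))$, which suffices since $\alpha>0$. This gives injectivity of $(\alpha,\theta,\beta)\mapsto$ distribution of $D$. Combined with Lemma~\ref{lem:joint} and Theorems~\ref{th:time_dep}--\ref{th:sampling_through_time}, this also yields identifiability of the birth-death parameters through $\beta$.
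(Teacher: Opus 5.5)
Your proof is correct, but it recovers $\alpha$ and $\beta$ by a different route from the paper.

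The paper first obtains $\theta$ from the singularity of $K(t)$ at $t=1/\theta$. With $\theta$ fixed, it then uses only $\kappa_1,\kappa_2,\kappa_3$. The first two cumulants give a quadratic in $\alpha$ with two candidate solutions $(\alpha_\pm,\beta_\pm=\alpha_\mp/\alpha_\pm)$. The third cumulant separates them, because $\kappa_{3,+}-\kappa_{3,-}$ is proportional to $\alpha_+\alpha_-(\alpha_+-\alpha_-)$.

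Your step (i) carries the same information as the paper's singularity argument, written in terms of coefficients. The quantity $\kappa_{n+1}/(n\kappa_n)$ is essentially the ratio test applied to the Taylor coefficients $\kappa_n/n!$ of $K$, whose radius of convergence is $1/\theta$.

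The genuine difference is step (ii). There you read off $\alpha$ as the order of the algebraic singularity $(1-\theta t)^{-\alpha}$, using the growth $c_n\sim\beta n^{\alpha}/\Gamma(\alpha)$. You then get $\beta$ from $\kappa_1$ alone.

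What your route buys is an explicit limit formula for each parameter. It also avoids the quadratic, the possible spurious second root, and the third-cumulant tie-break. Your $O(n^{-\alpha})$ control of the lower-order term $(n-1)!\,\alpha\theta^n$ is enough to make (i) and (ii) rigorous.

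The cost is that you use the whole cumulant sequence to pin down $\alpha$. The paper needs only three low-order cumulants once $\theta$ is known, which is closer to a moment-based estimator. Both routes already rely on tail or analytic information for $\theta$, and for the identifiability statement itself they are equally valid.
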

\begin{proof}
The MGF of the gamma distribution of $\hat{D}$ is
\begin{align*}
    M_{\hat{D}}(t) &= (1-\theta t)^{-\alpha}
\end{align*}
and thus from \eqref{eq:cum_gen_D} we find the CGF of $D$ to be
\begin{align*}
    K(t) = -\alpha \log(1-\theta t) + \beta \left( (1-\theta t)^{-\alpha} - 1 \right),
\end{align*}
which exists for all $t<\theta^{-1}$ and has a pole at $t=\theta^{-1}$. From the uniqueness of the CGF for a given distribution and the identity theorem follows that $\theta$ can be uniquely inferred from the position of the pole.

This leaves us with showing that the distribution of $D$ is also unique for any value of $\alpha$ and $\beta$. For this, we again use the first three cumulants
\begin{align}
    \kappa_1 &= (\beta+1)\alpha\theta \label{eq:k1_gamma} \\
    \kappa_2 &= (\beta\alpha+\beta+1)\alpha\theta^2 \label{eq:k2_gamma}\\
    \kappa_3 &= \left(\beta\alpha^2+3\beta\alpha+2\beta+2\right)\alpha\theta^3, \label{eq:k3_gamma}
\end{align}
which can be derived by plugging the moments and cumulants of the gamma distribution
\begin{align*}
    m_{\hat{D},n} &= \theta^n\frac{\Gamma(\alpha+n)}{\Gamma(\alpha)} \\
    \kappa_{\hat{D},n} &= \theta^n\alpha(n-1)!
\end{align*}
into equation \eqref{eq:cumulant_D}.
Solving the equations for the first two cumulants of $D$ \eqref{eq:k1_gamma}-\eqref{eq:k2_gamma} for $\alpha$ and $\beta$ yields two solution
\begin{align*}
    \alpha_\pm &= \frac{1}{2\theta} \left( \kappa_1 \pm \sqrt{\kappa_1^2 + 4(\theta\kappa_1-\kappa_2)} \right) \\
    \beta_\pm &=  \frac{\alpha_\mp}{\alpha_\pm}.
\end{align*}
For the general case where these two solutions differ, they would yield the third cumulants \eqref{eq:k3_gamma} 
\begin{align*}
    \kappa_{3,\pm} &= \left( \alpha_\pm\alpha_\mp + 3\alpha_\mp + 2\frac{\alpha_\mp}{\alpha_\pm} + 2 \right)\alpha_\pm\theta \\
    &= -\left(\theta\kappa_1-\kappa_2\right)\alpha_\pm - 3\left(\theta\kappa_1-\kappa_2\right) +2\kappa_1.
\end{align*}
Suppose that both solutions, produce the same third cumulant, i.e. \ $\kappa_{3,+}=\kappa_{3,-}$, this would result in $\alpha_+=\alpha_-$ (because they also have the same first two cumulants) and therefore also in $\beta_+=\beta_-$. Hence, all parameters are uniquely identifiable from the first three cumulants and the CGF of $D$. This concludes the proof, as these quantities can be uniquely determined for a given distribution of $D$.
\end{proof}

In the original gamma spike model, evolutionary distance does not only accumulate from contributions of birth events, but additionally arises from a molecular clock. In the case of evolutionary distance and a homogeneous (strict) clock rate, this is a deterministic measure $c$ which is proportional to the branch length. The distribution of the observed sum $D_c$ of contributions to the evolutionary distance from bursts at birth events and the molecular clock is then given by
\begin{align*}
	P(D_c=d\mid \theta, M_\mathrm{burst},c) = \begin{cases}
		0 & c\leq d \\
	P(D=d-c\mid \theta, M_\mathrm{burst},c) & d>c.
	\end{cases}
\end{align*}
Thus, identifying the additional model parameter $c$ is trivial.

\section{Discussion}
Time-dependent birth-death processes are not identifiable from the phylogenetic tree alone \cite{louca_extant_2020,louca_fundamental_2021}. This has consequences for phylodynamic inference if no additional information on the population dynamics is available. Sequence alignments, however, in many cases yield more information than the tree alone. Here, we have considered the case where alterations to the genetic sequence occur at birth events. In this case, information on hidden birth events along branches of the phylogenetic tree can be obtained.

We have shown that time-dependent birth-death processes with either (or both) extant sampling and sampling through time are fully identifiable from the phylogenetic tree and the number of hidden birth events along branches. The latter does not need to be known exactly: For two different burst models, we have shown that evolutionary distance along branches yields sufficient information to infer the parameters of the burst model and the birth-death processes. The results can easily be extended to other burst models.

There are yet many open questions regarding the identifiability of generalized birth-death processes. Non-identifiabilities have only been shown for models with either extant sampling or sampling through time. The case where both sampling schemes are combined--as we do here--has, to the best of our knowledge, not been considered yet (but consult \cite{truman_fossilized_2025} for identifiability results of the fossilized birth-death process). Here we show that this case is identifiable if information on hidden birth events is available. An important next step will be to consider multi-type birth-death processes \cite{kuhnert_phylodynamics_2016} instead of single-type processes considered here. Such models are widely used in phylodynamics as they can account for heterogeneity in the population, e.g. by modeling different geographic locations with migration between them. The congruence classes of such processes have not been characterized yet.

Our results on the identifiability of time-dependent birth-death processes under models with accumulation of changes at birth have wide-ranging applications. Across phylodynamic application domains, ``birth'' events commonly coincide with an accumulation of mutations or substitutions: In single-cell biology, DNA is copied upon cell division so mutations occur at ``birth''. In epidemiology, upon transmission of a new host, the pathogen rapidly changes to adopt to this new host so mutations occur at ``birth''. In macroevolution, under the theory of punctuated equilibrium \cite{eldredge_punctuated_1972}, most changes occur at cladogenesis--meaning at ``birth''. Finally, in linguistics, there is evidence of bursts of change at ``birth'' \cite{atkinson_languages_2008} which has been hypothesized to be linked  to the widely considered process of schismogenesis \cite{bateson_199_1935,mansfield_linguistic_2025}. Thus our results have important consequences across application domains: birth and death rates are identifiable from merely the sequencing data.  

\subsubsection*{Acknowledgements}
We thank Marcus Overwater for helpful advice on Section~\ref{sec:summing_over_B} and Timothy Vaughan for proofreading. The project received funding from the European Research Council (ERC) under the European Union’s Horizon 2020 research and innovation programme grant agreement No 101001077 (PhyCogy).

\printbibliography

\appendix

\section{The number of hidden birth events in time-dependent birth-death processes}
The following derivations follow the approach in \cite{bokma_unexpectedly_2012}, where the distribution of the number of hidden birth events along branches was derived for the case of constant rates and extant sampling.

\begin{theorem}
	Consider a time-dependent birth-death process with extant sampling, defined by the parameters $\theta_{BD}=(\lambda(\tau),\mu(\tau),\rho)$, where $\lambda(\tau)$ and $\mu(\tau)$ are strictly positive and continuously differentiable functions of $\tau\geq 0$ and $\rho\in(0,1]$. The number of hidden birth events $B^{(\tau_s,\tau_e)}$ along a branch from $\tau_s$ to $\tau_e$ before the present ($\tau_s>\tau_e\geq0$) is Poisson distributed with mean
	\begin{align*}
		\mean{B^{(\tau_s,\tau_e)}\mid\theta_{BD}} = \int_{\tau_e}^{\tau_s}\d\tau 2\lambda(\tau)p_0(\tau),
	\end{align*}
	where $p_0(\tau)$ denotes the probability \eqref{eq:p0_extant} that an individual alive at time $\tau$ does not leave any sampled descendants.
	\label{th:poisson_B_extant_sampling}
\end{theorem}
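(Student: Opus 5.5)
Following the approach of \cite{bokma_unexpectedly_2012}, the plan is to condition on a single lineage of the reconstructed tree and show that, along it, hidden birth events form an inhomogeneous Poisson process in $\tau$ with intensity $2\lambda(\tau)p_0(\tau)$. Integrating this intensity from $\tau_e$ to $\tau_s$ then gives the stated mean.

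First, fix a branch of the reconstructed tree spanning $(\tau_e,\tau_s)$. By definition this branch is a lineage that survives (has sampled descendants) at least until $\tau_e$, where the next observed event occurs. In the forward process, at time $\tau$ the individual carrying the lineage gives birth at rate $\lambda(\tau)$. A birth event on the branch stays hidden exactly when one of the two descendants leaves no sampled offspring while the other continues the sampled lineage. It becomes a visible branching point when both descendants leave sampled offspring.

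Second, I would make this rigorous with a backward-in-time probability generating function argument. Let $g(\tau,z)$ be the probability that a lineage alive at $\tau$ has sampled descendants whose reconstructed ancestry up to the first observed event (or the present) is a single branch, weighted by $z^{B}$, where $B$ counts hidden births. Summing over whether the first event in $[\tau-\d\tau,\tau]$ is a birth with one extinct subtree (factor $2\lambda(\tau)p_0(\tau)\,z$, the $2$ coming from which daughter survives), a death, or nothing, gives the linear ODE
\begin{align*}
	\frac{\partial g}{\partial\tau} = -\left(\lambda(\tau)+\mu(\tau)\right)g + 2\lambda(\tau)p_0(\tau)\,z\,g .
\end{align*}
For the branch between $\tau_e$ and $\tau_s$, the multiplicative (Markov) structure gives
\begin{align*}
	g(\tau_s,z)/g(\tau_e,z)=\exp\left(\int_{\tau_e}^{\tau_s}\left[-\lambda-\mu+2\lambda p_0 z\right]\d\tau\right).
\end{align*}
The likelihood of the branch without tracking $B$ is the value at $z=1$. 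Normalizing by it (i.e.\ conditioning on the reconstructed tree) yields the conditional PGF
\begin{align*}
	\exp\left((z-1)\int_{\tau_e}^{\tau_s}2\lambda p_0\,\d\tau\right).
\end{align*}
This is exactly the PGF of a Poisson distribution with the claimed mean.

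Third, independence across branches and the identification of $p_0$ with \eqref{eq:p0_extant} follow from the branching property together with the standard Riccati equation for $p_0$, with $p_0(0)=1-\rho$, as solved in \cite{morlon_reconciling_2011}.

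The main obstacle is justifying the ODE carefully. I need to make sure that conditioning on the observed tree only affects the branch through the ratio above, so that the sampling at $\tau=0$ and the extinct subtrees enter solely via $p_0$. Continuity of $\lambda$ and $\mu$ makes the infinitesimal expansion valid.
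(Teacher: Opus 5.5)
Your proposal is correct and follows essentially the same route as the paper. Your $z$-weighted linear ODE is the paper's master equation for $p_b(\tau_s,\tau_e)$ (equivalently for $p_1$) with hidden births marked by $z$, and normalizing by its value at $z=1$ is exactly the paper's division by $p_1(\tau_s)/p_1(\tau_e)$. The generating function just merges the paper's two steps, obtaining $\e^{-\int(\lambda+\mu)}\bigl(\int 2\lambda p_0\bigr)^b/b!$ and checking that the normalizer equals $\exp\bigl(-\int_{\tau_e}^{\tau_s}(\lambda+\mu-2\lambda p_0)\,\d\tau\bigr)$, into a single computation.
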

\begin{proof}
As argued by \cite{bokma_unexpectedly_2012}, the rate with which hidden birth events arise along a branch at time $\tau$ before the present is given by $2\lambda(\tau)p_0(\tau)$, as birth event occur with rate $\lambda(\tau)$ and the probability of either offspring not being sampled is $2 p_0(\tau)$. The probability of an individual not undergoing any birth or death events between time $\tau_s$ and $\tau_e$ before the present is $\e^{-\int_{\tau_e}^{\tau_s}\d\tau(\lambda(\tau)+\mu(\tau))}$. Thus, the probability of hidden birth events occurring along a branch only at times $\tau_1,\dots,\tau_B$ is given by
\begin{align*}
	\e^{-\int_{\tau_e}^{\tau_s}\d\tau(\lambda(\tau)+\mu(\tau))}\frac{1}{B!}\prod_{i=1}^B 2\lambda(\tau_i)p_0(\tau_i).
\end{align*}
The factor $1/B!$ arises because the order of the birth events does not matter. Integrating over all possible times of the hidden birth events yields
\begin{align}
	\e^{-\int_{\tau_e}^{\tau_s}\d\tau(\lambda(\tau)+\mu(\tau))}\frac{1}{B!}\left(\int_{\tau_e}^{\tau_s}\d\tau2\lambda(\tau)p_0(\tau)\right)^B.
	\label{eq:prob_B_birth_events_uncond}
\end{align}
In order to condition on the existence of the branch from $\tau_s$ to $\tau_e$, we need to compute the probability $p_b(\tau_s,\tau_e)$ that an individual at time $\tau_s$ survives until time $\tau_e$ and all birth events in between do not lead to additional sampled descendants. This probability satisfies the master equation
\begin{align*}
	\partial_{\tau_s} p_b(\tau_s,\tau_e) = -(\lambda(\tau_s)+\mu(\tau_s))p_b(\tau_s,\tau_e) + 2\lambda(\tau_s)p_0(\tau_s)p_b(\tau_s,\tau_e).
\end{align*}
The first term corresponds to no event happening in an infinitesimal time interval before $\tau_s$, and the second term corresponds to the probability of a birth event from which one of the two descending clades remains unsampled. This master equation is identical to the master equation for the probability $p_1(\tau)$ that an individual alive at time $\tau$ leaves only a single sampled descendant
\begin{align}
	\partial_\tau p_1(\tau) = -(\lambda(\tau)+\mu(\tau))p_1(\tau) + 2\lambda(\tau)p_0(\tau)p_1(\tau).
    \label{eq:master_p1_extant_sampling}
\end{align}
However, both probabilities have different initial conditions $p_b(\tau_e,\tau_e)=1$ and $p_1(0)=\rho$. Thus, we find $p_b(\tau_s,\tau_e)=p_1(\tau_s)/p_1(\tau_e)$ (see \cite{stadler_decoding_2024} for extended derivations for the case of constant rates).
The probability mass function for the number of hidden birth events along a branch running from $\tau_s$ to $\tau_e$ is thus given by \eqref{eq:prob_B_birth_events_uncond} divided by $p_1(\tau_s)/p_1(\tau_e)$ to condition on the existence of the branch,
\begin{align}
	P(B^{(\tau_s,\tau_e)}=b\mid\theta_{BD}) = \e^{-\int_{\tau_e}^{\tau_s}\d\tau(\lambda(\tau)+\mu(\tau))}\frac{p_1(\tau_e)}{p_1(\tau_s)}\frac{1}{b!}\left(\int_{\tau_e}^{\tau_s}\d\tau2\lambda(\tau)p_0(\tau)\right)^b.
	\label{eq:poisson_B_extant_sampling}
\end{align}
To show that this is a Poisson distribution with mean $\int_{\tau_e}^{\tau_s}\d\tau2\lambda(\tau)p_0(\tau)$, we need to show that
\begin{align}
	-\log\left(\e^{-\int_{\tau_e}^{\tau_s}\d\tau(\lambda(\tau)+\mu(\tau))}\frac{p_1(\tau_e)}{p_1(\tau_s)}\right) = \int_{\tau_e}^{\tau_s}\d\tau2\lambda(\tau)p_0(\tau).
	\label{eq:show_poisson_mean}
\end{align}
We note that we can express the left-hand side as
\begin{align*}
	-\log\left(\e^{-\int_{\tau_e}^{\tau_s}\d u(\lambda(u)+\mu(u))}\frac{p_1(\tau_e)}{p_1(\tau_s)}\right) &= \left[ \int_{0}^{\tau}\d u(\lambda(u)+\mu(u)) + \log{p_1(\tau)} \right]_{\tau_e}^{\tau_s}
\end{align*}
Using the fundamental law of calculus, we are left with showing that the derivative of the term in square brackets is the integrand $2\lambda(\tau)p_0(\tau)$ in the right-hand side of \eqref{eq:show_poisson_mean}.
Using the master equation~\eqref{eq:master_p1_extant_sampling} for the derivative of $p_1(\tau)$, we get
\begin{align*}
	\partial_\tau \left[\int_{0}^{\tau}\d u(\lambda(u)+\mu(u)) + \log{p_1(\tau)}\right] &= \lambda(\tau)+\mu(\tau) + \frac{\partial_\tau p_1(\tau)}{p_1(\tau)} \\
	&= 2\lambda(\tau)p_0(\tau).
\end{align*}
\end{proof}

\begin{theorem}
	Consider a time-dependent birth-death process with sampling through time $\theta_{BDS}=(\lambda(\tau),\mu(\tau),\psi(\tau))$, where $\lambda(\tau)$, $\mu(\tau)$ and $\psi(\tau)$ are strictly positive and continuously differentiable functions of $\tau\geq 0$. The number of hidden birth events $B^{(\tau_s,\tau_e)}$ along a branch from $\tau_s$ to $\tau_e$ before the present ($\tau_s>\tau_e\geq0$) is Poisson distributed with mean
	\begin{align*}
		\mean{B^{(\tau_s,\tau_e)}\mid\theta_{BDS}} =\int_{\tau_e}^{\tau_s}\d\tau 2\lambda(\tau)p_0(\tau),
	\end{align*}
	where $p_0(\tau)$ is defined by \eqref{eq:diff_eq_p0}-\eqref{eq:init_p0} and denotes the probability that an individual alive at time $\tau$ does not leave any sampled descendants.
	\label{th:poisson_B_sampling_through_time}
\end{theorem}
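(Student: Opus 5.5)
We follow the argument of Theorem~\ref{th:poisson_B_extant_sampling} step by step; only the definition of $p_0$ and the boundary condition of the single-lineage probability change. Along a branch existing from $\tau_s$ to $\tau_e$, the lineage must avoid death and sampling, and every birth event on it must produce one daughter clade that leaves no sampled descendants. Such hidden births therefore occur at rate $2\lambda(\tau)p_0(\tau)$, now with $p_0$ the solution of \eqref{eq:diff_eq_p0}--\eqref{eq:init_p0}. The probability that no birth, death or sampling event occurs on $(\tau_e,\tau_s)$ is $\e^{-\int_{\tau_e}^{\tau_s}\d\tau(\lambda(\tau)+\mu(\tau)+\psi(\tau))}$. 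Integrating over the unordered times of $b$ hidden births gives the unconditioned weight
\begin{align*}
\e^{-\int_{\tau_e}^{\tau_s}\d\tau(\lambda+\mu+\psi)}\frac{1}{b!}\left(\int_{\tau_e}^{\tau_s}\d\tau\,2\lambda(\tau)p_0(\tau)\right)^b .
\end{align*}

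Next we condition on the existence of the branch. Let $p_b(\tau_s,\tau_e)$ be the probability that an individual alive at $\tau_s$ survives unsampled to $\tau_e$, with all intermediate births hidden. It satisfies
\begin{align*}
\partial_{\tau_s}p_b=-(\lambda+\mu+\psi)p_b+2\lambda p_0 p_b,\qquad p_b(\tau_e,\tau_e)=1 .
\end{align*}
Let $p_1(\tau)$ be the probability density that a lineage alive at $\tau$ yields exactly one sampled leaf at a fixed sampling time $\tau_e$ (or a single lineage ending at $\tau_e$). It obeys the same linear ODE in $\tau$. Since the equation is linear, homogeneous and first order, any strictly positive solution $f$ gives $p_b(\tau_s,\tau_e)=f(\tau_s)/f(\tau_e)$. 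We can take
\begin{align*}
f(\tau)=\exp\Bigl(-\int_0^\tau\d u\,[\lambda(u)+\mu(u)+\psi(u)-2\lambda(u)p_0(u)]\Bigr),
\end{align*}
so no choice of boundary condition is actually needed. Dividing the weight by $p_b$ gives a pmf of the form $C\,\beta^b/b!$, with $\beta=\int_{\tau_e}^{\tau_s}2\lambda p_0$.

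It remains to show $C=\e^{-\beta}$, i.e.
\begin{align*}
\int_{\tau_e}^{\tau_s}(\lambda+\mu+\psi)\,\d\tau+\log f(\tau_e)-\log f(\tau_s)=\int_{\tau_e}^{\tau_s}2\lambda p_0\,\d\tau .
\end{align*}
By the fundamental theorem of calculus this reduces to $\lambda+\mu+\psi+f'/f=2\lambda p_0$, which holds directly from the ODE. Summing over $b$ gives a total of one, as a consistency check.

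The main obstacle is justifying the branch-level master equation with the extra $\psi$ term. The lineage must not be sampled in the interior, since a sampled, removed lineage would end the branch. A hidden birth requires exactly one daughter clade to go unsampled, and the factor $2$ accounts for which daughter that is. We also need $p_0<1$ on $(0,\tau_s]$ and positivity of the rates so that all quantities are well defined, which follows from \eqref{eq:diff_eq_p0} with $\psi>0$.
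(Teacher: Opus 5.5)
Your proposal is correct and is essentially the paper's proof: the same unconditioned weight $\e^{-\int_{\tau_e}^{\tau_s}(\lambda+\mu+\psi)\,\mathrm{d}\tau}\beta^b/b!$, division by the branch-existence probability written as a ratio of solutions of the linear master equation, and a fundamental-theorem-of-calculus check, with your explicit integrating factor $f$ replacing $p_1$ as a harmless simplification; your coefficient $-(\lambda+\mu+\psi)$ is the right one, whereas the paper's displayed equation for $p_1$ has $-(\lambda+\mu-\psi)$, a sign typo under which the final check would fail. One slip of your own: since $1/p_b=f(\tau_e)/f(\tau_s)$, the identity to verify is $\int_{\tau_e}^{\tau_s}(\lambda+\mu+\psi)\,\mathrm{d}\tau-\log f(\tau_e)+\log f(\tau_s)=\int_{\tau_e}^{\tau_s}2\lambda p_0\,\mathrm{d}\tau$ (your display has the $\log f$ terms swapped), and it is this corrected form that your pointwise reduction $\lambda+\mu+\psi+f'/f=2\lambda p_0$ actually establishes.
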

\begin{proof}
	The proof is identical to the one for Theorem~\ref{th:poisson_B_extant_sampling}. The probability mass function for the number of hidden birth events \eqref{eq:poisson_B_extant_sampling} becomes
	\begin{align*}
		P(B^{(\tau_s,\tau_e)}=b\mid\theta_{BDS}) = \e^{-\int_{\tau_e}^{\tau_s}\d\tau(\lambda(\tau)+\mu(\tau)+\psi(\tau))}\frac{p_1(\tau_e)}{p_1(\tau_s)}\frac{1}{b!}\left(\int_{\tau_e}^{\tau_s}\d\tau2\lambda(\tau)p_0(\tau)\right)^b
	\end{align*}
	with the only difference being the addend of $\psi(\tau)$ in the exponent accounting for no sampling event during the time from $\tau_s$ to $\tau_e$. The master equation~\eqref{eq:master_p1_extant_sampling} for $p_1(\tau)$ becomes
	\begin{align*}
	\partial_\tau p_1(\tau) = -(\lambda(\tau)+\mu(\tau)-\psi(\tau))p_1(\tau) + 2\lambda(\tau)p_0(\tau)p_1(\tau).
	\end{align*}

\end{proof}

\section{Identifiability of constant-rate birth-death processes with extant sampling}
\label{sec:const_rates_derivation}
\begin{theorem}
	Let $\theta_{cBD}=(\lambda,\mu,\rho)$ and $\theta_{cBD}'=(\lambda',\mu',\rho')$ be two constant-rate birth-death models with extant sampling. If both models yield the same probability distributions of phylogenetic trees and the same statistics for the number of hidden birth events along branches of a tree, then they are identical.
\end{theorem}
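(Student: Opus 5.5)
The plan is a direct computation with constant rates, avoiding the general congruence machinery of Theorem~\ref{th:time_dep}. First, I recall the tree-level non-identifiability for constant rates: the tree distribution under extant sampling depends on $(\lambda,\mu,\rho)$ only through two combinations, e.g.\ the net diversification rate $r=\lambda-\mu$ and $\rho\lambda$ \cite{stadler_incomplete_2009,stadler_how_2013}. This also follows from \eqref{eq:rho_lambda_0}--\eqref{eq:pulled_div_rate} with $\d\lambda/\d\tau=0$. So equality of tree distributions gives $\lambda-\mu=\lambda'-\mu'\equiv r$ and $\rho\lambda=\rho'\lambda'$.

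Second, I compute $p_0(\tau)$ explicitly from \eqref{eq:p0_extant} with constant rates:
\begin{align*}
1-p_0(\tau)=\frac{\e^{r\tau}}{\frac1\rho+\frac{\lambda}{r}(\e^{r\tau}-1)}=\frac{\rho r\,\e^{r\tau}}{r+\rho\lambda(\e^{r\tau}-1)},
\end{align*}
with the limit $\rho/(1+\rho\lambda\tau)$ when $r=0$. This depends only on $r$ and $\rho\lambda$, up to the explicit factor $\rho$. Multiplying by $\lambda$ shows that $\lambda(1-p_0(\tau))=\rho\lambda r\e^{r\tau}/(r+\rho\lambda(\e^{r\tau}-1))$ is a function of the two tree-identifiable combinations only. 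This is the constant-rate version of the pulled-birth-rate identity, and it gives $\lambda'(1-p_0'(\tau))=\lambda(1-p_0(\tau))$.

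Third, by Theorem~\ref{th:poisson_B_extant_sampling} the hidden birth counts are Poisson with mean $\int_{\tau_e}^{\tau_s}2\lambda p_0$. Equality of these distributions means equal means for all $\tau_s>\tau_e$. Differentiating in $\tau_s$ gives $\lambda p_0(\tau)=\lambda' p_0'(\tau)$ for all $\tau$. Adding this to the identity from the previous step yields $\lambda=\lambda'$. It would suffice to compare a single branch, where the mean differs by $2(\lambda'-\lambda)(\tau_s-\tau_e)\neq0$. Then $\rho=\rho'$ follows from $\rho\lambda=\rho'\lambda'$, and $\mu=\mu'$ from $r=r'$.

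The only delicate point is the first step. I must justify that the constant-rate tree distribution is determined by, and determines, $(r,\rho\lambda)$. To do this I would write the tree likelihood as a product of $p_1$ and $1-p_0$ factors and check that $\lambda p_1(\tau)/\rho$ and $\lambda(1-p_0)$ depend only on these combinations. Alternatively, I can cite the constant-rate specialisation of Louca and Pennell's congruence condition, where equality of pulled diversification rates at $\tau=0$ and for all $\tau$ forces $r=r'$ once $\rho\lambda=\rho'\lambda'$. The special case $r=0$ is handled by the stated limit.
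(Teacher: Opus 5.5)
Your proof is correct and follows essentially the paper's route: the constant-rate congruence relations ($r=r'$ and $\rho\lambda=\rho'\lambda'$, equivalent to \eqref{eq:lambda_prime}--\eqref{eq:mu_prime}) make the $p_0$-dependent part of the Poisson mean invariant, so the means differ by $2(\lambda'-\lambda)(\tau_s-\tau_e)=-2\lambda(1-\rho/\rho')(\tau_s-\tau_e)$; the only difference is that the paper checks invariance of the $\log$ term in the integrated closed form of the mean, while you check invariance of the integrand $\lambda(1-p_0)$, which amounts to specialising the proof of Theorem~\ref{th:time_dep} to constant rates. One caution: showing that the tree-likelihood factors depend only on $(r,\rho\lambda)$ gives only the ``if'' direction of congruence, whereas your argument needs the ``only if'' direction (equal tree distributions force $r=r'$ and $\rho\lambda=\rho'\lambda'$), so that step should rest on the citation \cite{stadler_incomplete_2009}, as the paper's does.
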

\begin{proof}
The distribution of phylogenetic trees is the same for any two constant-rate birth-death processes with extant sampling $(\lambda,\mu,\rho)$ and $(\lambda',\mu',\rho')$ (i.e.\ they are congruent) if and only if 
\begin{align}
	\lambda'&=\frac{\rho}{\rho'}\lambda \label{eq:lambda_prime} \\
	\mu'&=\mu-\lambda(1-\frac{\rho}{\rho'}) \label{eq:mu_prime}
\end{align}
\cite{stadler_incomplete_2009}.

The number of hidden birth events along a branch running from $\tau_s$ to $\tau_e$ is Poisson distributed with mean
\begin{align*}
	\mean{B^{(\tau_s,\tau_e)}\mid\theta_{cBD}} &= \int_{\tau_e}^{\tau_s}\d\tau 2\lambda p_0(\tau) \\
	&= 2 \left[ \mu \tau - 
	\log\left(\lambda\rho + (\lambda(1-\rho)-\mu) \e^{-(\lambda-\mu)\tau}\right)\right]_{\tau_e}^{\tau_s}
\end{align*}
\cite{bokma_unexpectedly_2012,stolz_integrating_2024}. In the congruent model, the mean is
\begin{align*}
	\mean{B^{(\tau_s,\tau_e)}\mid\theta_{cBD}'} &=
	2 \left[ \mu' \tau - 
	\log\left(\lambda'\rho' + (\lambda'(1-\rho')-\mu') \e^{-(\lambda'-\mu')\tau}\right)\right]_{\tau_e}^{\tau_s} \\
    &= 2 \left[ \left(\mu-\lambda(1-\frac{\rho}{\rho'})\right)\tau - 
	\log\left(\lambda\rho + (\lambda(1-\rho)-\mu) \e^{-(\lambda-\mu)\tau}\right)\right]_{\tau_e}^{\tau_s} \\
	&= \mean{B^{(\tau_s,\tau_e)}\mid\theta_{cBD}} - 2 \lambda\left(1-\frac{\rho}{\rho'}\right)\left(\tau_s - \tau_e\right).
\end{align*}
This follows directly from the transformation rules above, as they lead to $\lambda'\rho'=\lambda\rho$, $\lambda'-\mu'=\lambda-\mu$ and $\lambda'(1-\rho')-\mu'=\lambda(1-\rho)-\mu$ (the same argument was used in~\cite{stadler_incomplete_2009}). The expected number of hidden birth events is thus the same for both processes only if $\rho=\rho'$. Due to \eqref{eq:lambda_prime}-\eqref{eq:mu_prime}, this would imply that both processes are identical, i.e.\ $(\lambda,\mu,\rho)=(\lambda',\mu',\rho')$.
\end{proof}

\end{document}